\numberwithin{equation}{section}
\numberwithin{figure}{section}
\DeclareMathOperator*{\argmax}{arg-max}
\newcommand{\musiclab}{\textsc{MusicLab}}
\begin{document}

\title{Optimizing Expected Profit in a Multinomial Logit Model \\ with Position Bias and Social Influence}

\author{
A.~Abeluik 
\and
G.~Berbeglia 
\and
M.~Cebrian 
\and
P.~Van~Hentenryck
}
\institute{
A.~Abeliuk and M.~Cebrian \at The University of Melbourne and National ICT Australia.
\and
G.~Berbeglia \at Melbourne Business School (The University of Melbourne) and National ICT Australia. 
\and
P.~Van~Hentenryck \at The Australian National University and National ICT Australia. \email{pvh@nicta.com.au}.
}



\date{}

\maketitle

\begin{abstract}
  Motivated by applications in retail, online advertising, and
  cultural markets, this paper studies how to find the optimal
  assortment and positioning of products subject to a capacity
  constraint and social influence. We prove that the optimal
  assortment and positioning can be found in polynomial time for a
  multinomial logit model capturing utilities, position bias, and
  social influence.  Moreover, in a dynamic market, we show that the
  policy that applies the optimal assortment and positioning and
  leverages social influence outperforms in expectation any policy not
  using social influence.
\end{abstract}

\section{Introduction}

One of the most studied problems in the area of revenue management is
the optimal assortment problem. Informally, the problem consists in
selecting a subset of products to offer to consumers so that the
expected profit is maximised. Such optimal subset depends on the
profits obtained by selling a unit of each of these products, as well
as the purchasing behaviour of the consumers, which is represented
using a discrete choice model. Most of the discrete choice models of
practical significance are special cases of the Random Utility Model
(RUM) in which each product $i$ is characterized by a distribution
$D_i$ that represents the consumer utility. When offered a subset $S$
of products, each consumer can then be viewed as a realization of all
the distributions $D_i$ and she picks the product in $S$ with the
highest utility.  Some of the most studied discrete choice models are
the Multinomial Logit (MNL) \cite{luce1959}, the \emph{nested
  multinomial logit} (NMNL) \cite{williams1977formation} and the Mixed
Multinomial Logit (MMNL) \cite{daly1978improved}.  When consumers
choose products according to the MNL model, the optimal assortment
problem can be solved efficiently \cite{talluri2004revenue}. This
result was later extended in \cite{rusmevichientong2010dynamic} to the
case in which there is a constraint on the maximum number of products
that can be offered. However, when consumers follow either an NMNL or
an MMNL model, the optimal assortment problem is NP-hard
\cite{davis2014assortment,rusmevichientong2010assortment,bront2009column}.

The underlying assumption in the vast literature that studies the
optimal assortment problem is that the consumer choice behaviour is
solely affected by the subset of products being offered: The
particular way in which the products are displayed has no importance.
This assumption, however, is violated in many real-world situations.
Moreover, in many settings, social signals (e.g., quality ratings and
recommendations) are displayed together with the products. Once again,
this is traditionally ignored in the literature. 

The problem of finding an assortment and positioning of products that
maximizes profit subject to capacity constraints and social influence
has numerous practical applications, from retail to online advertising
and cultural markets. Position bias is pervasive in E-Commerce and
recommendation systems (e.g.,
\cite{kempe2008cascade,Lerman2014,maille2012sponsored}), while many
settings also display of social signals to influence consumers. Social
science studies often feature both position bias and social influence
\cite{krumme2012quantifying,salganik2006experimental}. It is thus
important for practical applications to study a model that jointly
considers position bias, social influence, and limits on the number of
displayed products although, to our knowledge, such a model has not
been analyzed so far.

This paper addresses this gap and studies a multinomial logit model
that captures these three elements. It contains two main technical
contributions:
\begin{enumerate}
\item It shows that optimizing the expected profit for this
  multinomial logit model can be performed in polynomial time. This
  result holds although the traditional regularity assumption is
  violated in this setting.

\item It proves that it is beneficial to use social influence for
  maximizing the expected profit. In particular, the paper proves
  that, in a dynamic market based on the model, the policy that uses
  social influence and applies the optimal assortment and positioning
  at every step outperforms in expected profit any policy not using
  social influence.
\end{enumerate}

\noindent
The first contribution generalizes the seminal results in
\cite{rusmevichientong2010dynamic} to include position bias and to
study the role of social influence. The second contribution sheds new
light on social influence. Indeed, most studies focus on showing the
negative side of social influence, i.e., the increased
unpredictability and inequalities it creates. This paper shows its
main positive aspect: Its ability to improve the efficiency of the
market.

The rest of the paper is organized as follows. Sections
\ref{section:specification} and \ref{section:related} present the
problem specification and related work. Section
\ref{section:algorithm} proves that maximizing the expected profit in
the proposed model can be performed in polynomial time. Section
\ref{section:social} demonstrates the benefits of social influence. Section
\ref{section:conclusion} concludes the paper.

\section{Problem Specification}
\label{section:specification}

Consider a market with $N$ products ${\cal P} = {1,...,N}$, where only
$c \leq N$ products in ${\cal P}$ can be displayed. In the MNL model,
each product $i \in \mathcal{P}$ has an utility $$u_i = q_i +
\epsilon_i$$ where $q_i$ is a constant representing the inherent
quality of product $i$ and $\epsilon_i$ is a random variable following
a Gumbel distribution with zero mean and representing the error
term. The no-purchase option, denoted by $0$, is added by assuming its
utility is zero, i.e., $e^{q_0}=1$.  Given a subset $S \subseteq
\mathcal{P}$, the probability that a consumer chooses product $i \in
S$ is 
\[
P_i(S) = \frac{e^{q_i}}{\sum_{j \in S}e^{q_j} +  1}.
\]

In this paper, we consider the optimal assortment problem in which
consumers follow the MNL but each product must be displayed in one of
$N$ positions, each of which has a visibility $\theta_i \geq 0$ $(1
\leq i \leq N)$. Without loss of generality, we assume
$\theta_{1}\ge\theta_{2}\geq\cdots\geq\theta_{N}\geq0$. A position
assignment is an injective function $\sigma: S\rightarrowtail 1..N$
that assigns a position to each product in an assortment $S \subseteq
{\cal P}$. The intrinsic utility $u_i$ of product $i$ when displayed in
position $j$ is shifted by a factor $\ln \theta_j$. Since
\[
e^{q_i + \ln \theta_j} = e^{\ln \theta_j} \ e^{q_i} = \theta_j \ e^{q_i},
\]
the probability of selecting product $i$ becomes
\[
P_{i}(S,\sigma)=\frac{\theta_{\sigma_j} u_i }{\sum_{j\in S}\theta_{\sigma_j} u_j +1 }.
\]
where $u_i=e^{q_i}$ for notational simplicity. Note that, for any
assortment $S$ and position assortment $\sigma$, we have
\[
\sum_{i\in S}P_{i}(S,\sigma)+P_{0}(S,\sigma)=1,
\]
where $P_{0}(S,\sigma)$ is the no-choice option.

Let $r_{i}$ denote the marginal profit of product $i$ and assume that
the no-purchase option has no profit, i.e., $r_{0}=0$. The first
problem considered in the paper is to find a polynomial-time
algorithm that maximizes the expected profit in this market, i.e.,
\[
U(S,\sigma)=\sum_{i\in S}r_{i}P_{i}(S,\sigma).
\]
 \begin{definition}[Capacitated Multinomial Logit Assortment and Positioning Problem]
With the above notations, the Capacitated Multinomial Logit Assortment and Positioning Problem (CMLAPP) is defined as 
\[
Z^{*}=\max\left\{ U(S,\sigma) \mid S \subseteq {\cal P} \ \wedge \ |S| \leq c \ \wedge \ \sigma: S \rightarrowtail 1..N \right\}.
\]
\end{definition}

\noindent
This paper also considers a dynamic market where the perceived product
quality $u_i$ varies along time according to a social influence
signal. Consider $d_{i,t}$ the number of selections (e.g., the number
of downloads in a music market or the number of clicks on a website)
of product $i$ at time $t$. We define the utility of product $i$ at
time $t$ as a combination of the inherent quality of the product $u_i$
and a non-decreasing and positive social influence function
$f(\cdot)$, i.e., $$u_{i,t} = u_i + f(d_{i,t}).$$ In this dynamic
market,
\[
d_{i,t} = d_{i,t-1} + 1
\]
if product $i$ is selected at step $t$ and $d_{i,t} = d_{i,t-1}$
otherwise. More formally, the probability of selecting product $i$ for
social signal $d$ given a subset $S$ and a position assignment
$\sigma$ is given by
\[
P_{i}(S,\sigma,d)=\frac{\theta_{\sigma_j} (u_i + f(d_{i}))}{\sum_{j\in
    S} (\theta_{\sigma_j} (u_i + f(d_{i}))) +1}.
\]
In a dynamic market with $T$ steps, the expected profit can be defined
by the following recurrence:
\[
\begin{array}{llll}
U_t(d) & = & \max_{S,\sigma} \sum_{i \in S} P_{i}(S,\sigma,d) (r_i + U_{t+1}(d[i \leftarrow d_i + 1])) & (t \in 1..T) \\
U_{T+1}(d) & = & 0
\end{array}
\]
where $d[i \leftarrow v]$ represents the vector $d$ where element $i$
has been replaced by $v$ and $U_1(\langle 0,\ldots,0\rangle)$ denotes
the overall expected profit.

When there is no social signal, the optimization problem is the same
at each step $t$ and the policy that selects an optimal solution to
the CMLAPP for each $t$ is optimal.  The second problem considered in
this paper is to determine that the effect of social influence on this
policy.

\section{Related Work}
\label{section:related}

This section reviews relevant related work, including some algorithmic
concepts that are important for the results of this paper.

\subsection{The MusicLab Model}

The problem studied in the paper is motivated by the seminal study of
social influence in the {\sc MusicLab} \cite{salganik2006experimental}
and the descriptive model introduced in \cite{krumme2012quantifying}
to capture these experiments. In the {\sc MusicLab}, participants
enter a cultural market and are presented with a number of songs. Only
the title of the song and the band name are displayed. A market
participant chooses a song and, after listening to the song, is given
the opportunity to download it. The {\sc MusicLab} experiments were
designed to measure the impact of social influence in cultural
markets. Participants were split into a number of independent worlds.
In all but one world, the participants were shown a social signal,
i.e., the number of downloads of each song in their world. In the
last world, no social signal was displayed. The experiments were used to
demonstrate the unpredictability of cultural markets in the presence
of social influence.

The descriptive model introduced in \cite{krumme2012quantifying} was
used to capture and reproduce the essence of this cultural market.
The model is defined in terms of a market composed of $n$ songs. Each
song $i \in \{1,\ldots,n\}$ is characterized by two values:
\begin{enumerate}
\item Its {\em appeal} $A_i$ which represents the inherent preference
  of listening to song $i$ based only on its name and its band;

\item Its {\em quality} $q_i$ which represents the conditional
  probability of downloading song $i$ given that it was sampled.
\end{enumerate}
The \musiclab{} experiments present each participant with a playlist
$\pi$, i.e., a permutation of $\{1,\ldots,n\}$. Each position $p$ in
the playlist is characterized by its {\em visibility} $v_p$ which is
the inherent probability of sampling a song in position $p$. The model
specifies the probability of listening to song $i$ at time $k$ given a
playlist $\sigma$ as
\[
p_{i,k}(\sigma) =  \frac{v_{\sigma_i}(\alpha A_i+D_{i,k})}{\sum_{j=1}^n v_{\sigma_j}(\alpha A_j+D_{j,k})},
\]
where $D_{i,k}$ is the number of downloads of song $i$ at time $k$ and
$\alpha > 0$ is a scaling factor which is the same for all
songs. Observe that the probability of sampling a song depends on its
position in the playlist, its appeal, and the number of downloads at
time $k$. 

This paper generalizes the \musiclab{} model in multiple ways: It uses
a well-known multinomial logit model, introduces a limit on how many
products can be displayed (which is more realistic in practice), and
embeds the social influence signal in a non-decreasing, positive
function $f$. All three generalizations are significant for practical
applications.\footnote{The \musiclab{} was introduced for showing the
  unpredictability introduced by social influence in cultural
  markets.} This paper generalizes our own results on the \musiclab{}
\cite{MusicLab14} and show that, even in this more general setting,
the expected profit can be optimized in polynomial time and social
influence is beneficial in maximizing expected profit.

\subsection{E-Commerce Models}

This paper also provides an alternative to traditional models from the
E-Commerce literature. In E-Commerce, the \emph{click-through rate}
(CTR) for a link $l$ is the probability that $l$ receives a
click. This probability may depend on a combination of factors, the
most significant ones being the relevance of the content and the
positioning of the links. The simplest model, which is pervasive in
the e-Commerce literature (e.g.,
\cite{kempe2008cascade,maille2012sponsored}), assumes that the CTRs
are independent. More precisely, this model assumes that the CTR of
link $l$ is the product of a position effect $\theta_k$ and a
relevance effect $q_l$. This simplification makes the model attractive
both from theoretical and practical standpoint.  For example, this
model is widely adopted in online advertising, since the optimal
allocation is simply obtained by sorting the advertisements by
decreasing $\theta_k q_i$. However, the independence assumption of
CTRs is not always justified. Experimental analysis using eye-tracking
\cite{joachims2005accurately,buscher2009you} has inspired cascade
models, first introduced in \cite{craswell2008experimental} and
subsequently generalized. Informally speaking, the cascade model
captures a sequential search, where users consider links from top to
bottom and only look at the next link if the previous link was not
selected. The model studied in this paper presents an alternative
which capture many interesting markets, while remaining tractable
computationally.

\subsection{Optimal Capacitated Multinomial Logit Assortment}

Rusmevichientong, Shen, and Shmoys, in their seminal paper
\cite{rusmevichientong2010dynamic}, consider a special case of the
CMLAPP where there is no visibility component, i.e., $\theta_i = 1$
$(i \in 1..c)$, and no social influence. In other words, they consider
the following problem.

\begin{definition}[Capacitated Multinomial Logit Assortment Problem] With
  the above notations, the Capacitated Multinomial Logit Assortment
  Problem (CMLAP) amounts to determining
\[
Z^*=\max\left\{ U(S) \mid S\subseteq {\cal P} \ \wedge \ \left|S\right|\leq c\right\}.
\]
where the profit of an assortment $S$ is defined as
$
U(S)=\sum_{i\in S}r_{i}P_{i}(S),
$
and the probability to select product $i \in S$ is given by
$
P_{i}(S)= \frac{u_{i}}{\sum_{j\in S}u_{j}+1}.
$
\end{definition}
\noindent
One of the main results of their paper is to show that the CMLAP can
be solved in polynomial time and we now review some of the key
concepts and intuition underlying their results.  The optimal profit
can be expressed as follows:
\[
\begin{array}{lll}
Z^* & = & \max\{ \lambda \in {\cal R} \mid \exists S \subseteq {\cal P}: |S| \leq c \ \wedge \ U(S) \geq \lambda \} \\
    & = & \max\{ \lambda \in {\cal R} \mid \exists S \subseteq {\cal P}: |S| \leq c \ \wedge \ \frac{\sum_{i \in S} u_{i}r_i}{\sum_{j\in S}u_{j}+1} \geq \lambda \} \\
    & = & \max\{ \lambda \in {\cal R} \mid \exists S \subseteq {\cal P}: |S| \leq c \ \wedge \  \sum_{i \in S} u_{i}(r_i- \lambda) \geq \lambda \}.
\end{array}
\]
Observe that, for a specific $\lambda$, it suffices to rank the
expressions $u_{i}(r_i- \lambda)$ $(i \in {\cal P})$ to find the
subset $S$ maximizing the term $\sum_{i \in X} u_{i}(r_i- \lambda)$.
We can then define a function $A:\mathbb{R}\rightarrow\left\{
  S \subseteq {\cal P} :\left|S\right|\leq c\right\}$
as
\[
A(\lambda)=\argmax_{S:\left|S\right|\leq c}\sum_{i\in S}u_{i}\left(r_{i}-\lambda\right)
\]
where ties are broken arbitrarily. The optimal profit can be shown to
be equivalent to
\[
Z^* = \max\{ U(A(\lambda)) \mid \lambda \in {\cal R} \}.
\]
It is valid to drop the condition $\sum_{i \in S} u_{i}(r_i- \lambda)
\geq \lambda$ due to dominance properties: When the condition is
violated for some $\lambda$, there exists a value $\lambda' < \lambda$
such that $U(A(\lambda)) < U(A(\lambda'))$. The key observation in
\cite{rusmevichientong2010dynamic} is that there are at most $O(N^2)$ values to
consider for $\lambda$ and hence at most $O(N^2)$ subsets of ${\cal
  P}$ to consider for finding the optimal assortment.

\section{Capacitated Multinomial Logit Assortment and Positioning}
\label{section:algorithm}

This section shows that the CMLAPP can be solved in polynomial time
building on the techniques used for the CMLAP. The key insight is that
the CMLAPP only needs to consider the candidate subsets of the CMLAP.
 
The first step in the proof consists in showing that there is no
benefit in introducing gaps in the positioning.

\begin{definition}[Gap-Free Position Assignment] Given an assortment $S$, a position assignment $\sigma$ is gap-free if
$
\{ \sigma_i \mid i \in S \} = 1..|S|.
$
\end{definition}

\begin{lemma}
\label{lemma:no-gap}
There is an optimal gap-free solution to any CMLAPP. 
\end{lemma}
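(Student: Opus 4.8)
The plan is to prove this by an exchange argument: starting from any optimal solution $(S,\sigma)$, I will show that if $\sigma$ has a gap, I can remove it without decreasing the expected profit, and by repeating this I reach a gap-free solution that is still optimal.

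First I would make precise what a gap is. If $\sigma$ is not gap-free, then $\{\sigma_i \mid i \in S\}$ is a set of $|S|$ distinct positions that is not $\{1,\ldots,|S|\}$, so there must exist some position $p$ that is unused while some product is displayed at a position $p' > p$. The key structural fact I would exploit is that the visibilities are sorted decreasingly, $\theta_1 \ge \theta_2 \ge \cdots \ge \theta_N \ge 0$, so moving a product from a later (lower-visibility) position to an earlier unused (higher-visibility) position can only increase its $\theta$-weight $\theta_{\sigma_i} u_i$.

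The heart of the argument is to check how the objective $U(S,\sigma)=\sum_{i\in S} r_i P_i(S,\sigma)$ responds to such a move, and this is the step I expect to require the most care, because $U$ is a ratio and increasing one product's weight changes \emph{both} the numerator of that product's term and the shared denominator $\sum_{j\in S}\theta_{\sigma_j} u_j + 1$, which feeds into every other product's term as well. So it is not immediately obvious that increasing a single $\theta$-weight raises the total profit. The clean way to handle this is to use the linearized characterization from the CMLAP discussion: for the optimal profit value $\lambda = Z^*$, the condition $U(S,\sigma)\ge\lambda$ is equivalent to $\sum_{i\in S}\theta_{\sigma_i} u_i (r_i-\lambda)\ge\lambda$, obtained by clearing the positive denominator. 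I would therefore argue that moving product $i$ from position $p'$ to the earlier unused position $p$ multiplies its contribution $u_i(r_i-\lambda)$ by a factor that goes from $\theta_{p'}$ to $\theta_p \ge \theta_{p'}$; whether this helps depends on the sign of $u_i(r_i-\lambda)$.

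This sign issue is the real subtlety, so I would split into cases. If $r_i \ge \lambda$ for the product being moved, then $u_i(r_i-\lambda)\ge 0$ and increasing its visibility weight keeps the linearized sum $\ge \lambda$, certifying $U$ did not decrease. If instead $r_i < \lambda$, a product with below-average profit is being assigned high visibility; here the better move is not to promote it but to observe that we may simply reassign positions so that products are sorted by decreasing $u_i(r_i - \lambda)$ into positions $1,2,\ldots,|S|$, or even drop such a detrimental product entirely while staying within the capacity constraint, without lowering $U$. The cleanest overall route is thus: (i) fix an optimal $(S,\sigma)$ with value $\lambda=Z^*$; (ii) reassign the products of $S$ to positions $1,\ldots,|S|$ so that those with larger $u_i(r_i-\lambda)$ get larger $\theta$ (a rearrangement/matching inequality, since both sequences are then sorted the same way); (iii) verify via the linearized inequality that this rearranged gap-free assignment has $\sum_{i\in S}\theta_{\sigma_i}u_i(r_i-\lambda)$ at least as large as before, hence $U \ge \lambda = Z^*$, so it is also optimal and gap-free. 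The main obstacle, as noted, is rigorously controlling the ratio objective, and the device that resolves it is passing to the linearized form where the shared denominator is absorbed into $\lambda$ and the objective becomes an honest sorted inner product $\sum_i \theta_{\sigma_i}\, w_i$ with $w_i=u_i(r_i-\lambda)$.
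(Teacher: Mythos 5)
Your proof is correct in substance but takes a genuinely different route from the paper. The paper argues by a local exchange directly on the ratio objective: it picks an unused position $k$ and the first product $l$ placed after it, tries to move $l$ into the gap, and shows by clearing denominators that if this move were unprofitable then $r_l \leq R/Q$, i.e.\ product $l$'s marginal profit is at most the current expected profit, in which case $l$ can simply be deleted without lowering $U$; iterating yields a gap-free optimum. You instead linearize at $\lambda = Z^*$ (the Rusmevichientong-style device the paper only deploys later, for Lemma~\ref{lemma:subset}), turning the question into whether $\sum_i \theta_{\sigma_i} u_i(r_i-\lambda) \geq \lambda$ is preserved, and then exploit $\theta_1 \geq \cdots \geq \theta_N$ together with the rearrangement inequality. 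This is a clean global argument and has the side benefit of simultaneously producing the sorted optimal position assignment that the paper obtains separately via Lemma~\ref{lemma:rearrangment}; the paper's version is more elementary and self-contained since it needs no reference to $Z^*$. One caution: your final three-step summary (ii)--(iii) is slightly too quick, because promoting a product with $u_i(r_i-\lambda) < 0$ to a higher-visibility slot \emph{decreases} the linearized sum, so sorting alone does not suffice; you must first discard (or leave unpromoted) the negative-weight products. You do identify this sign issue and propose dropping such products --- which works, since removing a term $\theta_{\sigma_l}u_l(r_l-\lambda)\leq 0$ only raises the sum and respects the capacity constraint --- so the argument is complete once that dropping step is made an explicit, mandatory first stage rather than an aside. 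Note the parallel: your ``drop the negative-weight product'' step is exactly the paper's ``remove $l$ when $r_l \leq R/Q$'' step, reached by a different computation.
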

\begin{proof}
  Let $\sigma$ be a position assignment with gaps. Assume that no product
  is assigned to position $k$ and let $l = \min\{i \in S \mid \sigma_i
  > k \}$ be the first product assigned to a position higher than $k$.
  Since $\theta_{k}\geq\theta_{\sigma_l}$, moving product $l$ from $\sigma_l$ to $k$ increases 
  its visibility. If this move is not profitable, it must be that 
\begin{align*}
 & \frac{\sum_{i\in S}\theta_{\sigma_i}u_{i}w_{i}+\left(\theta_{k}-\theta_{\sigma_l}\right)u_{l}r_{l}}{\sum_{j\in S}\theta_{\sigma_j}u_{j}+\left(\theta_{k}-\theta_{\sigma_l}\right)u_{l}+1}
\leq
\frac{\sum_{i\in S}\theta_{\sigma_i}u_{i}r_{i}}{\sum_{j\in S}\theta_{\sigma_j}u_{j}+1}. 
\end{align*}
Let $R=\sum_{i\in S}\theta_{\sigma_i}u_{i}r_{i}$ and $Q=\sum_{i\in
  S}\theta_{\sigma_i}u_{i}+1$. The above inequality becomes 
\begin{align*}{l}
 & \frac{R+\left(\theta_{k}-\theta_{\sigma_l}\right)u_{l}r_{l}}{Q+\left(\theta_{k}-\theta_{\sigma_l}\right)u_{l}}
\leq 
\frac{R}{Q} \\
 & QR+Q\left(\theta_{k}-\theta_{\sigma_l}\right)u_{l}r_{l}
\leq 
RQ+R\left(\theta_{k}-\theta_{\sigma_l}\right)u_{l} \\
& r_{l} \leq \frac{R}{Q}.
\end{align*}
The last inequality states that the marginal profit of product $l$ is
smaller or equal to the expected profit. We now show that we can
remove product $l$ while not degrading the profit. Indeed, $r_l \leq
\frac{R}{Q}$ implies
\begin{align*}
 & -R\theta u_{l}\leq-Q\theta u_{l}r_{l}  & (\cdot-\theta u_{l}) \\
 & RQ-R\theta u_{l}\leq RQ-Q\theta u_{l}r_{l} & (+RQ) \\
 & \frac{R}{Q}\leq\frac{R-\theta u_{l}r_{l}}{Q-\theta u_{l}}. 
\end{align*}
The result follows.
\end{proof}

\noindent
Lemma \ref{lemma:no-gap} indicates that the position assignment of an
assortment $S$ only need to consider positions $1..|S|$. Hence, the
position assignment is a bijection from $S$ to $1..|S|$. As in the
CMLAP, we define a function $B:\mathbb{R}\rightarrow\left\{ X\subseteq
  {\cal P} :\left|X\right|\leq c\right\}$ as
\[
B(\lambda)=\argmax_{S:\left|S\right|\leq c} \max_{\sigma: S \rightarrowtail 1..|S|} \sum_{i\in S} \theta_{\sigma_i} u_{i}\left(r_{i}-\lambda\right)
\]
which, given a value $\lambda$, specifies the assortment producing the
best profit for some optimal position assignment.  The optimal
position assignment $\sigma^{\lambda}_{S}$ for assortment $S$ and
value $\lambda$ is defined as
\[
\sigma^{\lambda}_{S} = \argmax_{\sigma: S \rightarrowtail 1..|S|} \sum_{i\in S} \theta_{\sigma_i} u_{i}\left(r_{i}-\lambda\right),
\]
Ties are broken arbitrarily in these two expressions. The optimal profit of the CMLAPP can be then reformulated as
\[
Z^* = \max\{ U(B(\lambda),\sigma^{\lambda}_{B(\lambda)}) \mid \lambda \in {\cal R} \}.
\]
The optimal position assignment $\sigma^{\lambda}_{S}$ can be computed
easily thanks to a rearrangement inequality.

\begin{lemma}[Rearrangement]
\label{lemma:rearrangment} Let $x_{1,}x_{2},\ldots,x_{n}$ and $y_{1,}y_{2},\ldots,y_{n}$
be real numbers (not necessarily positive) with 
$
x_{1}\leq x_{2}\leq\cdots\leq x_{n}\text{ and }y_{1}\leq y_{2}\leq\cdots\leq y_{n},
$
and let $\pi$ be any permutation of $\left\{ 1,2,\ldots,n\right\} $.
Then the following inequality holds:
\[
x_{1}y_{\pi_1}+x_{2}y_{\pi_2}+\cdots+x_{n}y_{\pi_n}\leq x_{1}y_{1}+x_{2}y_{2}+\cdots+x_{n}y_{n}.
\]
\end{lemma}
\begin{proof}
  We prove the inequality by induction on $n$. The statement is
  obvious for $n=1$. Suppose that it is true for $n=k-1$ and consider
  $n=k$. Let $m$ be an integer such that $\pi(m)=k$. Since
  $x_{k} \geq x_{m}$ and $y_{\pi(m)}=y_{k}\geq y_{\pi(k)}$, we have
\[
\left(x_{k}-x_{m}\right)\left(y_{k}-y_{\pi(k)}\right)\geq0
\]
This implies that 
\[
x_{k}y_{\pi(k)} + x_{m}y_{k} \leq x_{k}y_{k}+x_{m}y_{\pi(k)} \leq  x_{k}y_{k}+x_{m}y_{k} 
\]
and hence
\[
x_{k}y_{\pi(k)}\leq  x_{k}y_{k}.
\]
By induction hypothesis,
\[
x_{1}y_{\pi(1)}+\cdots+x_{m}y_{\pi(m)}+\cdots+x_{n-1}y_{\pi(n-1)}\leq x_{1}y_{1}+\cdots+x_{m}y_{m}+\cdots+x_{k-1}y_{k-1}.
\]
Combining these two inequalities proves the result.
\end{proof}

\noindent
The optimal position assignment must thus satisfy
\[
u_{\pi_1} (r_{\pi_1} - \lambda) \geq \ldots \geq u_{\pi_{|S|}} (r_{\pi_{|S|}} - \lambda)
\]
for some ranking assignment $\pi: 1..|S| \rightarrowtail S$ and the
optimal position assignment can be defined as the inverse of $\pi$. We
now show that the CMLAPP only needs to consider the assortments
considered by the CMLAP.

\begin{lemma}
\label{lemma:subset}
Let $\mathcal{B}=\left\{ B(\lambda):\lambda\in\mathbb{R}\right\}$ and
$\mathcal{A}=\left\{ A(\lambda):\lambda\in\mathbb{R}\right\} $. Then,
$\mathcal{B \subseteq A}$. 
\end{lemma}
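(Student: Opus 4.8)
The plan is to prove the stronger statement that, for every $\lambda$, the assortment $B(\lambda)$ coincides with $A(\lambda)$, from which $\mathcal{B} \subseteq \mathcal{A}$ follows at once. Fix $\lambda$ and write $v_i = u_i(r_i - \lambda)$ for the modified profit of product $i$. By Lemma~\ref{lemma:rearrangment}, the optimal position assignment orders the products of any assortment $S$ by decreasing $v_i$, so that, writing $v^S_{(1)} \geq v^S_{(2)} \geq \cdots \geq v^S_{(|S|)}$ for the values of $S$ sorted in non-increasing order, the inner maximum defining $B(\lambda)$ equals $g(S) := \sum_{j=1}^{|S|} \theta_j\, v^S_{(j)}$. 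Thus $B(\lambda)$ is the assortment of size at most $c$ that maximizes $g$.

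First I would show that some maximizer of $g$ is a \emph{top set}, i.e. consists of the products with the largest values $v_i$. The tool is an exchange argument: if $a \notin S$ and $b \in S$ with $v_a > v_b$, then replacing $b$ by $a$ weakly increases every order statistic $v^S_{(j)}$ (raising one element of a multiset can only raise its sorted values), and since $\theta_j \geq 0$ this gives $g(S - b + a) \geq g(S)$. Iterating shows that, among assortments of a fixed size, a top set is optimal. A second observation removes products with negative modified profit: if the smallest value of $S$ is negative, deleting it changes $g$ by $-\theta_{|S|} v^S_{(|S|)} \geq 0$, hence is never disadvantageous. Therefore $g$ has a maximizer consisting of the top-$m$ products by value $v_i$, for some $m \leq c$, all of which satisfy $v_i > 0$.

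The second step is to match this top set with $A(\lambda)$. Recall that $A(\lambda) = \argmax_{|S| \leq c} \sum_{i \in S} v_i$ selects exactly the products with $v_i > 0$, capped at the $c$ largest; writing $K = \#\{i : v_i > 0\}$, this is the top-$\min(K,c)$ set. It then remains to argue $m = \min(K,c)$: if $m$ were smaller, there would be a product $p \notin B(\lambda)$ with $v_p > 0$ that could be appended in position $m+1$, changing $g$ by $\theta_{m+1} v_p$; whenever $\theta_{m+1} > 0$ this is strictly positive, contradicting optimality. Hence $m = \min(K,c)$ and $B(\lambda) = A(\lambda) \in \mathcal{A}$.

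The hard part will be the boundary between these cases, namely ties and vanishing visibilities. When $\theta_{m+1} = 0$ (equivalently, all remaining positions have zero visibility) a positive-value product can be appended without affecting $g$---or the choice probabilities---so $g$ has several maximizers, and the arbitrary tie-breaking in the definition of $B(\lambda)$ might return a set strictly larger than $A(\lambda)$; likewise, equalities $v_i = v_{i'}$ produce several equally good top sets. I expect to handle this by observing that a product placed at a zero-visibility position contributes nothing to $U(S,\sigma)$ and can be discarded from $B(\lambda)$ without changing the objective, and by breaking ties consistently (e.g. by index) in both $A$ and $B$; under these conventions the specific set returned by $B(\lambda)$ is realized as $A(\mu)$ for a suitable $\mu$, which is exactly the containment $\mathcal{B} \subseteq \mathcal{A}$.
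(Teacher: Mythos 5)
Your proposal is correct and follows essentially the same route as the paper: reduce the inner maximization to a sorted sum via the rearrangement lemma, then use an exchange argument to show that the assortment optimal for the position-weighted objective also maximizes the unweighted sum $\sum_{i\in S} u_i(r_i-\lambda)$, which is exactly the defining property of $A(\lambda)$. The only difference is that you explicitly flag the degenerate cases (vanishing visibilities $\theta_{m+1}=0$ and ties in $u_i(r_i-\lambda)$) where arbitrary tie-breaking could make $B(\lambda)$ differ from a maximizer of the unweighted sum; the paper's own two-case proof silently assumes these away, so your extra care is a refinement rather than a divergence.
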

\begin{proof}
  Consider $\lambda \in \mathcal{R}$, $S\in B(\lambda)$, and $\pi$ be
  the inverse of $\sigma^\lambda_S$. $\pi$ satisfies
\[
u_{\pi_1}\left(r_{\pi_1}-\lambda\right) \geq \cdots \geq u_{\pi_{|S|}}\left(r_{\pi_{|S|}}-\lambda\right).
\]
Consider first the case where $|S| < c$. Then, by optimality of
$B(\lambda)$ and $\theta_k \geq 0$, it must be the case that 
$
u_i (r_i - \lambda) \leq 0
$
for all $i \in {\cal P} \setminus S$. When $|S| = c$, it must be the case that
\[
u_i (r_i - \lambda) \geq u_j (r_j - \lambda) \;\;\;\;\;\; (i \in S \mbox{ and } j \in {\cal P} \setminus S)
\]
since otherwise swapping $i$ and $j$ would increase the profit.  As a
result, for any $S'\subseteq {\cal P}$ such that $|S'| \leq c$, we have that
\[
\sum_{i\in S}u_{i}\left(r_{i}-\lambda\right)\geq\sum_{j \in S'}u_{j}\left(r_{j}-\lambda\right).
\]
Hence $S \in \mathcal{A}$.
\end{proof}

\noindent
We are now in position to state that the CMLAPP can be solved in
polynomial time.
\begin{theorem}
\label{thm:poly}
CMLAPP can be solved in polynomial time.
\end{theorem}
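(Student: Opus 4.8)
The plan is to combine the three preceding lemmas with the breakpoint-counting argument of \cite{rusmevichientong2010dynamic}. By the reformulation established above, $Z^* = \max\{U(B(\lambda),\sigma^\lambda_{B(\lambda)}) \mid \lambda \in \mathcal{R}\}$, so it suffices to enumerate the distinct pairs $(B(\lambda),\sigma^\lambda_{B(\lambda)})$ that arise as $\lambda$ ranges over $\mathbb{R}$, evaluate the true profit $U$ at each, and return the best. The whole argument thus reduces to bounding the number of such pairs and showing that each can be produced and evaluated in polynomial time.

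To bound the number of candidate assortments, I would invoke Lemma \ref{lemma:subset}, which gives $\mathcal{B} \subseteq \mathcal{A}$, together with the key observation of \cite{rusmevichientong2010dynamic} that $A(\lambda)$ changes only at the breakpoints where the ranking of the linear functions $\lambda \mapsto u_i(r_i-\lambda)$ $(i \in \mathcal{P})$ changes. Each pair of these $N$ lines crosses at most once, so there are at most $O(N^2)$ breakpoints and hence $|\mathcal{A}| = O(N^2)$ distinct assortments. Consequently $|\mathcal{B}| \leq |\mathcal{A}| = O(N^2)$, and the assortment achieving $Z^*$ is guaranteed to lie in this polynomially sized family.

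For the positioning, Lemma \ref{lemma:no-gap} lets me restrict to gap-free assignments $\sigma: S \rightarrowtail 1..|S|$, and the rearrangement inequality (Lemma \ref{lemma:rearrangment}) shows that the assignment maximizing $\sum_i \theta_{\sigma_i}u_i(r_i-\lambda)$ simply sorts the products of $S$ in non-increasing order of $u_i(r_i-\lambda)$ and matches them against the positions in non-increasing order of visibility. Since this order is governed by the same linear functions as above, the pair $(B(\lambda),\sigma^\lambda_{B(\lambda)})$ is constant on each of the $O(N^2)$ intervals between consecutive breakpoints. The algorithm is then immediate: compute and sort the breakpoints, take one representative $\lambda$ per interval, form the corresponding pair $(S,\sigma)$ by a single sort, and evaluate $U(S,\sigma)$; the maximum over the $O(N^2)$ intervals equals $Z^*$, and the total cost is polynomial. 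The one point requiring care --- the closest thing to an obstacle --- is checking that the positioning step introduces no new breakpoints: because the visibilities are fixed and are matched against exactly the $\lambda$-dependent ranking of $u_i(r_i-\lambda)$, the positioning inherits the same $O(N^2)$ breakpoints instead of multiplying their count, which is what keeps the pair count, and hence the running time, polynomial.
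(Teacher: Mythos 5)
Your proposal is correct and follows essentially the same route as the paper: restrict to gap-free assignments (Lemma \ref{lemma:no-gap}), obtain the optimal positioning by sorting via the rearrangement inequality (Lemma \ref{lemma:rearrangment}), reduce the candidate assortments to $\mathcal{B}\subseteq\mathcal{A}$ (Lemma \ref{lemma:subset}), and use the $O(N^2)$ bound on $\mathcal{A}$ from \cite{rusmevichientong2010dynamic}. The paper compresses all of this into one sentence, so your write-up is simply a more explicit version of the intended argument, including the worthwhile observation that the positioning step introduces no new breakpoints.
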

\begin{proof}
  The result follows from Lemmas \ref{lemma:no-gap}--\ref{lemma:subset}
  and the fact that $\mathcal{A}$ can be computed in $O(N^2)$ time
  \cite{rusmevichientong2010dynamic}.
\end{proof}

This result is particularly interesting given the fact that the CMLAPP
violates the regularity assumption which states that the addition of
an option to a choice set should never increase the probability of
selecting an option in the original set.

\begin{definition}[Regularity Assumption]
  Let $X\subset Y \subseteq \mathcal{P}$. The regularity assumption
  for the CMLAP problem states that 
\[
 P_i(X) \geq P_i(Y) \;\;\; \forall i \in X \cup \{0\}.
\]
\end{definition}

\begin{lemma}
The CMLAPP problem violates the regularity assumption.
\end{lemma}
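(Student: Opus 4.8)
The plan is to exhibit an explicit small instance and compare the no-purchase probability across a nested pair of assortments, each displayed with its own profit-optimal gap-free positioning; here $P_i(S)$ in the CMLAPP is understood as $P_i(S,\sigma^*_S)$, where $\sigma^*_S$ is the positioning the operator would actually use. The intuition I want to capture is that, unlike the plain MNL, the profit-optimal positioning in the CMLAPP can place a low-appeal but high-margin product in the most visible slot, demoting a high-appeal product to a low-visibility slot and thereby \emph{lowering} the total attractiveness $\sum_{k\in S}\theta_{\sigma_k} u_k$ of the displayed set. Since the no-purchase probability $P_0(S,\sigma)=1/(\sum_{k\in S}\theta_{\sigma_k} u_k + 1)$ is decreasing in this total attractiveness, a drop in attractiveness forces $P_0$ strictly \emph{up}, which is precisely a regularity violation. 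Note that position bias is essential: when all $\theta_k=1$ (the CMLAP) the total attractiveness is independent of $\sigma$ and increases with the set, so regularity does hold.

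Concretely, I would take two display positions with $\theta_1 = 1 > \theta_2 = \tfrac12$ (the remaining visibilities are irrelevant), a product $1$ with high appeal and zero margin, $u_1 = 10$, $r_1 = 0$, and a product $2$ with low appeal and high margin, $u_2 = 1$, $r_2 = 10$. For the singleton $X = \{1\}$, Lemma \ref{lemma:no-gap} forces product $1$ into position $1$, so $P_0(X) = 1/(\theta_1 u_1 + 1) = 1/11$. For $Y = \{1,2\}$ I would compare the two gap-free positionings and check that displaying the profitable product first dominates: the assignment $2\mapsto 1,\ 1\mapsto 2$ yields profit $\tfrac{\theta_1 u_2 r_2 + \theta_2 u_1 r_1}{\theta_1 u_2 + \theta_2 u_1 + 1} = \tfrac{10}{7}$, whereas $1\mapsto 1,\ 2\mapsto 2$ yields only $\tfrac{5}{11.5}$, so the optimal $\sigma^*_Y$ demotes the high-appeal product $1$ to position $2$.

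Under $\sigma^*_Y$ the total attractiveness is $\theta_1 u_2 + \theta_2 u_1 = 1 + 5 = 6$, strictly below $\theta_1 u_1 = 10$, the attractiveness of $X$. Consequently $P_0(Y) = 1/(6+1) = 1/7 > 1/11 = P_0(X)$. Since $0 \in X\cup\{0\}$ and $P_0(X) < P_0(Y)$, the inequality $P_i(X)\ge P_i(Y)$ defining the regularity assumption fails for $i=0$, which proves the lemma.

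The step I expect to carry the weight is not any of the arithmetic but the conceptual observation in the first paragraph: that optimizing for profit can strictly reduce the overall attractiveness of the displayed set by trading a visible high-appeal product for a visible high-margin one. Once that mechanism is identified, any instance in which a high-appeal, low-margin product is displaced from the top slot by a low-appeal, high-margin one will do, so the counterexample is robust and the value $r_1=0$ can be replaced by any sufficiently small margin. I would also remark that the no-purchase option furnishes the cleanest witness; a violation can be read on product $1$ itself only through a position change, which makes it more delicate to engineer, whereas the effect on $P_0$ is transparent.
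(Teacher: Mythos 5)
Your proposal is correct and takes essentially the same approach as the paper: an explicit two-product counterexample in which the profit-optimal positioning of the larger assortment demotes the high-utility product to the less visible slot, lowering the total attractiveness $\sum_{k}\theta_{\sigma_k}u_k$ and thereby raising $P_0$ (the paper uses $u_1=3$, $r_1=1$, $u_2=1$, $r_2=\tfrac54$, $\theta_1=2$, $\theta_2=1$, with the added feature that both assortments are the optimal ones for capacities $c=1$ and $c=2$). One minor caution: with $r_1=0$ every positioning of $X=\{1\}$ is equally profit-optimal, so $P_0(X)=1/11$ rests on the gap-free convention; your own remark that $r_1$ can be taken small but positive removes this ambiguity.
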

\begin{proof}
  Consider two products with $u_1 = 3, r_1=1 , u_2=1, r_2=\frac{5}{4}$
  and visibilities $\theta_1 =2, \theta_2 =1$. The optimal assortment
  for each value of $c$ is given by
\begin{center}
\begin{tabular}{|c|c|c|}
\hline 
$c$ & $1$ & $2$\tabularnewline
\hline 
\hline 
$S^{*}$ & $\left\{ 1\right\} $ & $\left\{ 2,1\right\} $\tabularnewline
\hline 
$Z^{*}$ & $0.8571$ & $0.9167$\tabularnewline
\hline 
$p_{0}$ & $0.1429$ & $0.1667$\tabularnewline
\hline 
\end{tabular}
\end{center}
where the optimal position assigment is specified by the order of the
products in assortment $S^*$. Although the
revenue $Z^*$ increases when moving from $\{1\}$ to $\{1,2\}$, the no-purchase option satisfies
\[
P_0(\{1\},\sigma_1) < P_1(\{1,2\},\sigma_2)
\]
for $\sigma_1(1) = 1$, $\sigma_2(1) = 2$, and $\sigma_2(2) = 1.$
\end{proof}

\section{The Benefits of Social Influence}
\label{section:social}

This section proves that, under social influence, the expected profit
increases over time when using the optimal assortment and position
assignment. This holds regardless of the number of products, their
utilities, and the visibilities. The derivation uses ranking
assignments, i.e., the inverse of position assignments, since they
make the notations and proofs simpler. Recall that $u_{i,t}$, the
utility of product $i$ at time $t$, is defined as the combination of the
inherent quality of the product $u_i$ and a non-decreasing and positive
social influence function $f(\cdot)$, i.e., $u_{i,t} = u_i +
f(d_{i,t}).$

In state $t$, the probability that product $i$ is selected given
assortment $S$ and ranking $\pi$ is
\[
  P_{i,t}(S,\pi)=\frac{\theta_{\pi_i}u_{i,t}}{\sum_{j\in S}\theta_{\pi_j}u_{j,t}+1}.
\]
Under social influence, the expected profit over time can be
considered as a Markov chain where state $t+1$ only depends on state
$t$. If product $k$ is selected at time $t$ and if assortment $S'$ and
ranking $\pi'$ are used at time $t+1$ and $i \in S'$, the probability
that product $i$ is selected at time $t+1$ is given by
\[
  \frac{\theta_{\pi'_i}u_{i,t+1}}{\sum_{j \in S': j \not=k}
  \theta_{\pi'_j}u_{j,t} + \theta_{\pi'_k} u_{k,t+1} + 1 }
\]
Since $f$ is non-decreasing, we define
$\epsilon_{i,t}=f(d_{i,t}+1)-f(d_{i,t}) \geq 0$ such that
\[
u_{i,t+1} = u_{i,t} +\epsilon_{i,t}.
\]
We denote by $U_{t+1}((S',\pi')|(S,\pi))$ the expected profit at time
$t+1$ conditional to time $t$ if assortment $S$ and ranking $\pi$ are
used at time $t$ and assortment $S'$ and ranking $\pi'$ are used at
time $t+1$. The core of the proof consists in showing that
\[
U_{t+1}((S^*,\pi^*)|(S^*,\pi^*)) \geq U_{t}(S^*,\pi^*),
\]
where $(S^*,\pi^*)$ is the optimal (assortment,ranking) pair at time
$t$. The expected profit at time $t+1$ conditional to time $t$ in 
this context is given by 
\begin{align*}
\sum_{j \in S^*}\left(P_{j,t}(S^*,\pi^*) 
\cdot\frac{\sum_{i \in S^*: i\not=j}\theta_{\pi^*_i}u_{i,t}r_{i}+\theta_{\pi^*_j}(u_{j,t}+\epsilon_{j,t})r_{j}}{\sum_{i \in S^*: i\not=j}\theta_{\pi^*_i}u_{i,t}+\theta_{\pi^*_j}(u_{j,t}+\epsilon_{j,t} )+1}\right) \\
+\left(1-\sum_{i \in S^*} P_{i,t}(S^*,\pi^*) \right)\cdot U_{t}(S^*.\pi^*).
\end{align*}
The bottom term captures the case where no product is selected at time
$t$. while the top term captures the cases where product $j$ is selected,
which increases its perceived utility for the next time step.

\begin{lemma} 
\label{lem:1}
Let $(S^*,\pi^*)$ be the optimal allocation in time step $t$. We have
$$U_{t+1}((S^*,\pi^*)|(S^*,\pi^*)) \geq U_{t}(S^*,\pi^*).$$
\end{lemma}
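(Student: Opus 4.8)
The plan is to simplify the conditional expected profit, reduce the claim to a sum of per-product comparisons, and then sign each term using a structural property of the optimal allocation. Write $R = \sum_{i \in S^*}\theta_{\pi^*_i} u_{i,t} r_i$ and $Q = \sum_{i \in S^*}\theta_{\pi^*_i} u_{i,t} + 1$, so that $U_t(S^*,\pi^*) = R/Q$, and abbreviate $\delta_j = \theta_{\pi^*_j}\epsilon_{j,t} \ge 0$ and $p_j = P_{j,t}(S^*,\pi^*) = \theta_{\pi^*_j} u_{j,t}/Q$. The first step is to observe that when product $j$ is selected at time $t$ its perceived weight $\theta_{\pi^*_j} u_{j,t}$ grows to $\theta_{\pi^*_j}(u_{j,t}+\epsilon_{j,t})$, so the $-\theta_{\pi^*_j}u_{j,t}r_j$ and $-\theta_{\pi^*_j}u_{j,t}$ contributions cancel and the conditional profit at $t+1$ collapses to $\tfrac{R + \delta_j r_j}{Q + \delta_j}$.

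Next I would subtract the target $R/Q$ from $U_{t+1}((S^*,\pi^*)\mid(S^*,\pi^*))$. The no-purchase term contributes $\bigl(1-\sum_i p_i\bigr)R/Q$, and after cancelling the common $R/Q$ the claim reduces to showing $\sum_{j \in S^*} p_j\bigl(\tfrac{R+\delta_j r_j}{Q+\delta_j} - \tfrac{R}{Q}\bigr) \ge 0$. A direct computation gives $\tfrac{R+\delta_j r_j}{Q+\delta_j} - \tfrac{R}{Q} = \tfrac{\delta_j (Q r_j - R)}{Q(Q+\delta_j)} = \tfrac{\delta_j(r_j - R/Q)}{Q+\delta_j}$, so it suffices to prove $\sum_{j \in S^*} p_j \tfrac{\delta_j(r_j - R/Q)}{Q+\delta_j} \ge 0$. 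Since $p_j, \delta_j \ge 0$ and $Q+\delta_j > 0$, each summand carries the sign of $r_j - R/Q$; intuitively, selecting a high-profit product raises the conditional profit and a low-profit one lowers it.

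The crux is therefore the structural claim that every product actually carried by the allocation has marginal profit at least the expected profit, i.e. $r_j \ge R/Q$ whenever $\theta_{\pi^*_j} > 0$. I would establish this by the removal argument already used in Lemma~\ref{lemma:no-gap}: if some $j \in S^*$ had $\theta_{\pi^*_j}>0$ (hence $\theta_{\pi^*_j}u_{j,t}>0$, as $u_{j,t}=u_j+f(d_{j,t})>0$) and $r_j < R/Q$, then deleting $j$ yields the value $\tfrac{R-\theta_{\pi^*_j}u_{j,t}r_j}{Q-\theta_{\pi^*_j}u_{j,t}}$, which by that same algebra is \emph{strictly} larger than $R/Q$; this feasible allocation would beat $(S^*,\pi^*)$, contradicting optimality. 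Products with $\theta_{\pi^*_j}=0$ have $p_j=\delta_j=0$ and drop out of the sum. Hence $r_j - R/Q \ge 0$ for every surviving term, every summand is nonnegative, and the inequality follows.

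I expect the signing of the terms to be the only genuine obstacle: for a non-optimal $(S,\pi)$ the differences $r_j - R/Q$ can be negative and the inequality actually fails, so the argument must use optimality \emph{precisely} through the property $r_j \ge R/Q$, and the remaining manipulations are routine cancellations.
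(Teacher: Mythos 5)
Your proof is correct and follows essentially the same route as the paper's: the same collapse of the conditional profit given that $j$ was selected to $\tfrac{R+\delta_j r_j}{Q+\delta_j}$, the same cancellation of the no-purchase term, and the same reduction to the sign of $r_j - R/Q$ for $j \in S^*$. The only difference is in how that sign fact is justified --- the paper invokes $S^* \in B(\lambda^*)$ from the CMLAPP machinery, whereas you re-run the removal argument of Lemma~\ref{lemma:no-gap} directly and explicitly dispose of the $\theta_{\pi^*_j}=0$ edge case, which the paper glosses over; both rest on the same optimality principle.
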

\begin{proof}
  Let $u=u_{1},\cdots,u_{n}$ represent the current state at time
  $t$. Without loss of generality, we can rename the songs so that
  $\pi^*_i = i$ and drop the ranking subscript $\pi^*$. We also omit
  writing the inclusion in $S^*$ in the summations and use
  $\epsilon_j$ to denote $\epsilon_{j,t}$.  The optimal expected
  profit at time $t$ can be written as
\[
U_t(S^*,\pi^*)=\frac{\sum_{i}\theta_{i}u_{i}r_{i}}{\sum_{i}\theta_{i}u_{i} +1}=\lambda^{*}.
\]
The expected profit $U_{t+1}((\pi^*,S^*)|(\pi^*,S^*))$ in time $t+1$
conditional to time $t$, which we denote by $U^*_{t+1}$, is given by
\[
U^*_{t+1} = \sum_{j}\left(\frac{\theta_{j}u_{j}}{\sum \theta_{i}u_{i} +1} \cdot 
\frac{\sum_{i\not=j}\theta_{i}u_{i}q_{i}+\theta_{j}(u_{j}+\epsilon_j)r_{j}}{\sum_{i\not=j}\theta_{i}u_{i}+\theta_{j}(u_{j}+\epsilon_j) +1 }\right)+
\left(1-\frac{\sum_{i}\theta_{i}u_{i}}{\sum_{i}\theta_{i}u_{i} +1}\right)\cdot
\frac{\sum_{i}\theta_{i}u_{i}r_{i}}{\sum_{i}\theta_{i}u_{i}+1 }
\]
\[
=\sum_{j}\left(\frac{\theta_{j}u_{j}}{\sum \theta_{i}u_{i}+1}\cdot\frac{\sum_{i}\theta_{i}u_{i}r_{i}+\epsilon_j \theta_{j}r_{j}}{\sum_{i}\theta_{i}u_{i}+\epsilon_j \theta_{j} +1 }\right)+
\left(1-\frac{\sum_{j}\theta_{j}u_{j}}{\sum_{i}\theta_{i}u_{i} +1}\right)\cdot\lambda^{*}.
\]
Proving
\begin{equation}
U^*_{t+1} \geq U_t(S^*,\pi^*) \label{eq:1}
\end{equation}
amounts to showing that 
\[
\sum_{j}\left(\frac{\theta_{j}u_{j}}{\sum \theta_{i}u_{i}+1}\cdot\frac{\sum_{i}\theta_{i}u_{i}r_{i}+\epsilon_j \theta_{j}r_{j}}{\sum_{i}\theta_{i}u_{i}+\epsilon_j \theta_{j}+ 1}\right)+
\left(1-\frac{\sum_{j}\theta_{j}u_{j}}{\sum_{i}\theta_{i}u_{i} +1}\right)\cdot\lambda^{*}
\geq \lambda^{*}.
\]
which reduces to proving
\[
\frac{1}{\sum_{i}\theta_{i}u_{i}+1}\sum_{j}\left[\frac{\theta_{j}^2u_{j}\epsilon_j }{\sum_{i}\theta_{i}u_{i}+\epsilon_j \theta_{j}+1}
\left( r_{j}-\lambda^{*}\right)\right] \geq 0
\]
or, equivalently, 
\[
\sum_{j}\left[\frac{\theta_{j}^2u_{j} \epsilon_j }{\sum_{i}\theta_{i}u_{i}+\epsilon_j\theta_{j}+1}\left( r_{j}-\lambda^{*}\right)\right] \geq 0.
\]
By definition of $B$, the optimality of $S^*$ implies that $S^* \in
B(\lambda^*)$. Hence, by optimality of $S^*$, $(r_i-\lambda^*)\geq 0$
for all $i\in S$.  The result follows since, for all $i$,
$\frac{\theta_{j}^2 u_{j} \epsilon_j
}{\sum_{i}\theta_{i}u_{i}+\epsilon_j\theta_{j}+1}\geq 0$.
\end{proof}

\noindent
We are now in position to state the main result of this section.

\begin{theorem}
\label{theorem:socialinfluence}
Under social influence, the expected profit is nondecreasing over
time when the optimal assortment and ranking are used at every step.
\end{theorem}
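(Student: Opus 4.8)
The plan is to show that the expected per-step profit $V_t$ under the optimal policy satisfies $V_{t+1}\geq V_t$, where $V_t$ denotes the expectation of the instantaneous profit $U_t(S^*,\pi^*)$ taken over the random social signal that the optimal policy reaches at time $t$ when starting from $\langle 0,\ldots,0\rangle$. The argument rests on two ingredients: Lemma~\ref{lem:1}, which bounds the conditional profit at $t+1$ when the \emph{same} allocation is reused, and the elementary fact that re-optimizing at time $t+1$ can only improve on reusing $(S^*,\pi^*)$. All of the analytic work has already been done in Lemma~\ref{lem:1}; what remains is to lift its pointwise (per-state) conclusion to an expectation over the stochastic trajectory.

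First I would condition on the realized social signal $d$ at time $t$ and let $(S^*,\pi^*)$ be the corresponding optimal allocation. Applying $(S^*,\pi^*)$ at time $t$ induces a distribution over the signal $d'$ at time $t+1$: with probability $P_{k,t}(S^*,\pi^*)$ product $k$ is selected and $d'=d[k\leftarrow d_k+1]$, and with the complementary probability the signal is unchanged. In each such successor state the optimal allocation at $t+1$ yields at least as much instantaneous profit as reusing $(S^*,\pi^*)$ there, since by construction it solves the CMLAPP for that state. Averaging this pointwise domination over the successor distribution shows that the conditional expected \emph{optimal} profit at $t+1$ is at least $U_{t+1}((S^*,\pi^*)\mid(S^*,\pi^*))$, which by Lemma~\ref{lem:1} is in turn at least $U_t(S^*,\pi^*)$.

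Finally I would take the expectation of this chain of inequalities over the random signal at time $t$ distributed according to the optimal policy. The resulting left-hand side is exactly $V_{t+1}$ and the right-hand side is exactly $V_t$, so $V_{t+1}\geq V_t$, and iterating over $t$ yields nondecreasing expected profit across all steps. The main obstacle here is conceptual rather than computational: one must correctly formalize the expectation over the branching trajectory of social signals and carefully justify that the re-optimized allocation dominates the reused allocation in \emph{every} successor state, so that the averaging step and Lemma~\ref{lem:1} compose cleanly. Once that bookkeeping is in place, the theorem is an immediate consequence of Lemma~\ref{lem:1} combined with the optimality of re-optimization.
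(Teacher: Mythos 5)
Your proposal is correct and follows essentially the same route as the paper: the chain $U_{t+1}((S^{**},\pi^{**})\mid(S^*,\pi^*)) \geq U_{t+1}((S^*,\pi^*)\mid(S^*,\pi^*)) \geq U_t(S^*,\pi^*)$, with the first inequality from re-optimization at $t+1$ and the second from Lemma~\ref{lem:1}. The only difference is that you make explicit the final averaging over the random social signal at time $t$, a bookkeeping step the paper's proof leaves implicit.
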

\begin{proof}
  Let $(S^*,\pi^*)$ and $(S^{**},\pi^{**})$ be the optimal solution
  in time step $t$ and $t+1$ respectively. Then,
\[
U_{t+1} ((\pi^{**}, S^{**})| (\pi^*,S^*))\geq U_{t+1} ((\pi^{*}, S^{*})| (\pi^*,S^*)) \geq U_{t} ((\pi^*,S^*)),
\]
where the first inequality follows from the optimality of
$(S^{**},\pi^{**})$ and the second inequality follows from Lemma
\ref{lem:1}.
\end{proof}

\noindent
Denote by $P^*$ the policy that applies the optimal CMLAPP solution at
each step. Theorem \ref{theorem:socialinfluence} entails an
interesting corollary about $P^*$. Indeed, when there is no social
influence, $P^*$ outperforms any policy $P$ since the optimization
problem is the same for every step. Since $P^*$ under social influence
outperforms (in expectation) $P^*$ with no social influence, it
follows that $P^*$ under social influence outperforms any policy not
using social influence.

\begin{corollary}
  \label{thm:increasing}
  In expectation, applying the optimal CMLAPP solution at each step
  with or without social influence outperforms any policy not using
  social influence.
\end{corollary}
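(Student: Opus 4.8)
The plan is to formalize the three-link argument sketched just before the corollary. Let $P^*$ be the policy that applies an optimal CMLAPP solution at every step, and write $P^*_{SI}$ and $P^*_{NSI}$ for $P^*$ run with and without the social signal; let $P$ denote an arbitrary policy that ignores social influence, and let $V(\cdot)$ be the total expected profit over the $T$ steps. I want to establish the two inequalities
\[
V(P^*_{NSI}) \geq V(P) \qquad \text{and} \qquad V(P^*_{SI}) \geq V(P),
\]
and I would obtain the second from the first together with $V(P^*_{SI}) \geq V(P^*_{NSI})$.

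First I would settle the side without social influence. When a policy does not use the social signal, consumers face the fixed base utilities $u_i$ at every step, so the one-step choice probabilities, and hence the one-step expected profit $U(S,\sigma)$, are independent of the step $t$ and of the accumulated downloads $d$. The horizon problem then decouples: in the recurrence the continuation value is collected with total probability one and is independent of the current $(S,\sigma)$, so it only adds a constant and does not affect the per-step argmax. Consequently the optimal NSI decision at each step is an optimal CMLAPP solution, which exists and is computable by Theorem \ref{thm:poly}, giving $V(P^*_{NSI}) = T\,Z^*$, whereas any NSI policy earns at most $Z^*$ per step, so $V(P) \leq T\,Z^* = V(P^*_{NSI})$.

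Next I would compare $P^*_{SI}$ with $P^*_{NSI}$. Both start at $t=1$ from the empty signal $d = \langle 0,\ldots,0\rangle$, so, writing $g_t$ for the expected profit collected at step $t$ under $P^*_{SI}$ (averaged over the signal states reachable at time $t$ under the policy), the two policies share the same first-step value $g_1 = Z^*$. Theorem \ref{theorem:socialinfluence} then gives $g_1 \leq g_2 \leq \cdots \leq g_T$: Lemma \ref{lem:1} supplies the one-step comparison $U_{t+1}((S^*,\pi^*)\mid(S^*,\pi^*)) \geq U_t(S^*,\pi^*)$ state by state, and averaging it over the state distribution at time $t$ promotes it to $g_{t+1} \geq g_t$. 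Since $P^*_{NSI}$ earns the constant $Z^* = g_1$ at every step, summing over the horizon yields $V(P^*_{SI}) = \sum_{t=1}^{T} g_t \geq T\,g_1 = T\,Z^* = V(P^*_{NSI})$. Chaining this with the first inequality completes both claims.

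The hard part will be the conditional-versus-unconditional gap in the last step: Lemma \ref{lem:1} and Theorem \ref{theorem:socialinfluence} compare profits conditional on a fixed current state, whereas the corollary concerns the unconditional profit accumulated along a random download trajectory. Turning the per-state inequality into $g_{t+1} \geq g_t$ requires averaging the one-step bound over the policy-induced distribution of reachable states at each step and checking that the optimal decision invoked in Lemma \ref{lem:1} is exactly the one generating the Markov transition, so that the conditioning matches the dynamics. A secondary point to pin down is the first-step boundary condition, namely that at the initial state $d=0$ both policies face the same utilities and thus share the value $g_1 = Z^*$; this is what makes the two trajectories start from a common level and implicitly uses that the social term vanishes at zero downloads (the no-signal world, with $f(0)=0$).
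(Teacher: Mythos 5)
Your proposal is correct and follows essentially the same route as the paper, which proves the corollary by exactly this three-link chain in the paragraph preceding the statement: without social influence the per-step problem is stationary so $P^*$ earns $Z^*$ per step and dominates any no-signal policy, while Theorem \ref{theorem:socialinfluence} makes the per-step profit of $P^*$ under social influence nondecreasing from the common starting value $Z^*$. You are in fact more careful than the paper on the two points you flag -- averaging the conditional inequality of Lemma \ref{lem:1} over the policy-induced state distribution, and the initial condition $f(0)=0$ needed for the two trajectories to coincide at $t=1$ -- both of which the paper leaves implicit.
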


\section{Conclusion}
\label{section:conclusion}

Motivated by practical applications in E-Commerce and recommendation
systems, this paper studied a multinomial logit model that captures
position bias, social influence, and limits on how many products can be
displayed. It showed how to optimize the expected profit for this
multinomial logit model in polynomial time. In addition, it showed
that it is beneficial to use social influence for maximizing the
expected profit. Experimental results on the \musiclab{}
\cite{MusicLab14}, a special case of the model with no capacity
constraint, an identity function $f$, an obligation to make a choice,
indicate that the algorithm provides significant improvements in
expected profit compared to simple ranking policies and that the
benefits of social influence can be substantial. Future work will
focus on generalizing the results to the case where market
participants have different preferences.

\bibliographystyle{spmpsci} 
\bibliography{model} 

\begin{thebibliography}{10}
\providecommand{\url}[1]{{#1}}
\providecommand{\urlprefix}{URL }
\expandafter\ifx\csname urlstyle\endcsname\relax
  \providecommand{\doi}[1]{DOI~\discretionary{}{}{}#1}\else
  \providecommand{\doi}{DOI~\discretionary{}{}{}\begingroup
  \urlstyle{rm}\Url}\fi

\bibitem{MusicLab14}
{Abeliuk}, A., {Berbeglia}, G., {Cebrian}, M., {Van Hentenryck}, P.: {Measuring
  and Optimizing Cultural Markets}.
\newblock ArXiv e-prints  (2014)

\bibitem{bront2009column}
Bront, J.J.M., M{\'e}ndez-D{\'\i}az, I., Vulcano, G.: A column generation
  algorithm for choice-based network revenue management.
\newblock Operations Research \textbf{57}(3), 769--784 (2009)

\bibitem{buscher2009you}
Buscher, G., Cutrell, E., Morris, M.R.: What do you see when you're surfing?:
  using eye tracking to predict salient regions of web pages.
\newblock In: Proceedings of the SIGCHI Conference on Human Factors in
  Computing Systems, pp. 21--30. ACM (2009)

\bibitem{craswell2008experimental}
Craswell, N., Zoeter, O., Taylor, M., Ramsey, B.: An experimental comparison of
  click position-bias models.
\newblock In: Proceedings of the 2008 International Conference on Web Search
  and Data Mining, pp. 87--94. ACM (2008)

\bibitem{daly1978improved}
Daly, A., Zachary, S.: Improved multiple choice models.
\newblock Determinants of travel choice \textbf{335}, 357 (1978)

\bibitem{davis2014assortment}
Davis, J.M., Gallego, G., Topaloglu, H.: Assortment optimization under variants
  of the nested logit model.
\newblock Operations Research \textbf{62}(2), 250--273 (2014)

\bibitem{joachims2005accurately}
Joachims, T., Granka, L., Pan, B., Hembrooke, H., Gay, G.: Accurately
  interpreting clickthrough data as implicit feedback.
\newblock In: Proceedings of the 28th annual international ACM SIGIR conference
  on Research and development in information retrieval, pp. 154--161. ACM
  (2005)

\bibitem{kempe2008cascade}
Kempe, D., Mahdian, M.: A cascade model for externalities in sponsored search.
\newblock In: Internet and Network Economics, pp. 585--596. Springer (2008)

\bibitem{krumme2012quantifying}
Krumme, C., Cebrian, M., Pickard, G., Pentland, S.: Quantifying social
  influence in an online cultural market.
\newblock PloS one \textbf{7}(5), e33,785 (2012)

\bibitem{Lerman2014}
Lerman, K., Hogg, T.: Leveraging position bias to improve peer recommendation.
\newblock PLOS ONE \textbf{9}(6), 1--8 (2014)

\bibitem{luce1959}
Luce, D.: Individual Choice Behavior.
\newblock John Wiley and Sons (1965)

\bibitem{maille2012sponsored}
Maill{\'e}, P., Markakis, E., Naldi, M., Stamoulis, G.D., Tuffin, B.: Sponsored
  search auctions: an overview of research with emphasis on game theoretic
  aspects.
\newblock Electronic Commerce Research \textbf{12}(3), 265--300 (2012)

\bibitem{rusmevichientong2010dynamic}
Rusmevichientong, P., Shen, Z.J.M., Shmoys, D.B.: Dynamic assortment
  optimization with a multinomial logit choice model and capacity constraint.
\newblock Operations research \textbf{58}(6), 1666--1680 (2010)

\bibitem{rusmevichientong2010assortment}
Rusmevichientong, P., Shmoys, D., Topaloglu, H.: Assortment optimization with
  mixtures of logits.
\newblock Tech. rep., Tech. rep., School of IEOR, Cornell University (2010)

\bibitem{salganik2006experimental}
Salganik, M.J., Dodds, P.S., Watts, D.J.: Experimental study of inequality and
  unpredictability in an artificial cultural market.
\newblock science \textbf{311}(5762), 854--856 (2006)

\bibitem{talluri2004revenue}
Talluri, K., Van~Ryzin, G.: Revenue management under a general discrete choice
  model of consumer behavior.
\newblock Management Science \textbf{50}(1), 15--33 (2004)

\bibitem{williams1977formation}
Williams, H.C.: On the formation of travel demand models and economic
  evaluation measures of user benefit.
\newblock Environment and Planning A \textbf{9}(3), 285--344 (1977)

\end{thebibliography}

\end{document}